\def\duzomniejsze{<\kern-.7mm<}
\def\duzowieksze{>\kern-.7mm>}
\def\textbf#1{{\bf #1}}
\def\beq{\begin{equation}}
\def\eeq{\end{equation}}
\def\be{\begin{equation}}
\def\ee{\end{equation}}
\def\ben{\begin{eqnarray}}
\def\een{\end{eqnarray}}
\def\beqa{\begin{eqnarray}}
\def\eeqa{\end{eqnarray}}
\def\eea{\end{array}}
\def\bea{\begin{array}}
\newcommand{\bei}{\begin{itemize}}
\newcommand{\eei}{\end{itemize}}
\newcommand{\bee}{\begin{enumerate}}
\newcommand{\eee}{\end{enumerate}}
\newcommand{\tr}{\operatorname{Tr}}
\def\id{{\rm I}}
\def\>{\rangle}
\def\<{\langle}
\def\ot{\otimes}
\newcommand{\ket}[1]{| #1 \rangle}
\newcommand{\bra}[1]{\langle #1 |}
\newcommand{\proj}[1]{\ket{#1}\bra{#1}}
\newtheorem{lemma}{Lemma}
\newtheorem{theorem}[lemma]{Theorem}
\newtheorem{definition}[lemma]{Definition}
\def\12{{\textstyle \frac{1}{2}}}
\newcounter{protoline}
\newlength{\boxwidth}
\newlength{\bigboxwidth}
\newtheorem{proto_body}{Protocol}
\newenvironment{protocol}[1]{ 
  \setcounter{protoline}{0}
  \begin{minipage}{\boxwidth}
    
    \begin{proto_body}[ #1 ]
      \ \\[0mm] 
      \begin{description} }{\end{description}\end{proto_body}
\end{minipage}}
\newenvironment{protocol_cont}[0]{ 

\begin{minipage}{\boxwidth}
\addtocounter{proto_body}{-1}

\begin{proto_body}
\ \\ 
\begin{description} }{\end{description}\end{proto_body}
\end{minipage}}
\newenvironment{proof}[1][Proof]{\noindent \textbf{{#1~} }}{\qed}
\def\squareforqed{\hbox{\rlap{$\sqcap$}$\sqcup$}}
\def\qed{\ifmmode\squareforqed\else{\unskip\nobreak\hfil
\penalty50\hskip1em\null\nobreak\hfil\squareforqed
\parfillskip=0pt\finalhyphendemerits=0\endgraf}\fi}
\newcommand{\rateq}{Q}
\newcommand{\ratec}{C}
\newcommand{\sent}{\alpha}
\newcommand{\enviro}{\alpha'}
\newcommand{\mutind}{I_\Lambda}
\newcommand{\mutindss}{I_{ss}}
\newcommand{\spacey}{\,\,\,}
\begin{document}

\title{The quantum one-time pad in the presence of an eavesdropper}

\begin{abstract}

\end{abstract}
\author{Fernando G.S.L. Brand\~ao}
\affiliation{Departamento de F\'isica, Universidade Federal de Minas Gerais,
     Belo Horizonte, Caixa Postal 702, 30123-970, MG, Brazil}
\author{Jonathan Oppenheim}
\affiliation{Department of Applied Mathematics and Theoretical Physics, University of Cambridge U.K.}
\begin{abstract}
A classical one-time pad allows two parties to send private messages over
a public classical channel -- an eavesdropper who intercepts the communication learns nothing
about the message.  A quantum one-time pad is a shared quantum state
which allows two parties to send
private messages or private quantum states over a public quantum channel.  
If the eavesdropper intercepts
the quantum communication she learns nothing about the message.  In the 
classical case, a one-time pad can be created using shared and partially
private correlations. 
Here we consider the quantum case in the presence of an eavesdropper, 
and find the single letter
 formula for the rate at which the two parties can send messages using a quantum one-time pad.
\end{abstract}
\maketitle

\noindent
\textit{Introduction.} If two parties wish to send private messages over a public channel,
then they need to share a {\it one-time pad} or {\it key}  -- 
perfectly correlated and private strings which are as long as the messages
they want to send.  Often, the strings they share are not perfectly correlated
or not completely secure e.g.\ if produced through a channel subject to wire-tapping.  However, they can perform a protocol over the public
channel to reconcile
the errors in their strings, and amplify the privacy, so that they share
a shorter string which is perfectly correlated and private. Given access to many independent realizations of some distribution $P_{\text{XYZ}}$ shared between the two parties, Alice (X) and Bob (Y), and an eavesdropper Eve (Z), the rate $\ratec(P_{XYZ})$ at which Alice can send private messages to Bob was derived in \cite{Ashlwede-Csiszar1993}, based on a celebrated result due to Wyner and Csiszar \& Korner~\cite{Wyner-wiretap,CsiszarKorner}. It reads
\beq
\ratec(P_{XYZ})=\sup_{X \rightarrow U \rightarrow V}I(V:Y|U)-I(V:Z|U),
\eeq
with the conditional mutual information $I(V:Y|U):=H(VU)+H(YU)-H(VYU)-H(U)$, the Shannon entropy $H(X):=-\sum_x P_{X=x}\log P_{X=x}$ and the supremum taken over the Markov chain $X\rightarrow U\rightarrow V$. 

The quantum analog of this is three parties, Alice Bob and Eve, who instead of sharing a classical distribution, share a quantum state $\psi_{ABE}$. Alice then wishes to send private messages or private quantum states to Bob over a quantum public channel i.e.\ an insecure quantum channel where the eavesdropper might intercept the sent states. The question of how many private messages can be sent using a shared state was posed and answered by Schumacher and Westmoreland~\cite{schumacher2006quantum} in the case where initially the eavesdropper is uncorrelated with the two parties ($\psi_{ABE}=\psi_{AB}\otimes\psi_E$), and the sent messages are classical. They proved that the rate of classical private messages which can be sent is given by the quantum mutual information $I(A:B):=S(A)+S(B)-S(AB)$, with $S(A)=-\tr\rho_A\log\rho_A$ the von Neumann entropy.

Here, we consider the general case where the two parties want to protect themselves against an eavesdropper who might be correlated with their state. We also extend the result to the case where the parties wish to send encrypted quantum states to each other, i.e. any input state $\psi_{K}$ in dimension $\log{d}$ is encrypted so that during transmission it is indistinguishable from the maximally mixed state ($\id/\log{d}$). This makes the scenario a more fully quantum version of the classical situation.  We will find, in surprising analogy with the classical case, that the rate $\rateq$ that Alice can send encrypted quantum states to Bob using the state $\psi_{ABE}$ is 
\begin{equation} \label{optimalratesingleletter}
 \rateq(\psi_{ABE})=\sup_{A\rightarrow a\alpha}\frac{1}{2}
(I(a:B|\alpha)-I(a:E|\alpha)),
\end{equation}
with the conditional mutual information 
$I(a:B|\alpha):=S(a\alpha)+S(B\alpha)-S(aB\alpha)-S(\alpha)$ and 
the supremum
taken over channels which maps $\psi_A$ to $\rho_{a\alpha}$. Using simple entropic identities, one sees that the right hand side of Eq. (\ref{optimalratesingleletter}) is equivalent to
$\frac{1}{2}(I(a:B\alpha)-I(a:E\alpha))$, a quantity which
has made an early appearance in Ref. \cite{SSW2008} as the distillable entanglement assisted by {\it symmetric-side channels}. 
Note that this optimisation is over single copies of the state $\psi_{ABE}$ making the result of Equation (\ref{optimalratesingleletter}) {\it single-letter}.
This is rare in quantum information theory, where usually the solutions are intractable, requiring optimisation over arbitrary many copies of the state.

\vspace{0.1 cm}

\noindent
\textit{Statement of the problem.} The scenario is as follows: Alice and Bob share many copies of a quantum system in a (generally mixed) state $\psi_{AB}$ and since we want to protect against an arbitrary eavesdropper, we should imagine that Eve might have any state such that 
$\tr_E\ket{\psi}_{ABE}\bra{\psi}_{ABE}=\psi_{AB}$, i.e. the eavesdropper might hold a {\it purification} of Alice and Bob's state. Alice is given a message, either classical or quantum, which she should communicate to Bob. She is able to implement arbitrary quantum operations on her share $\psi_A^{\otimes n}$ of the state and any local ancillas, and she then sends a quantum system in state $\rho_\sent$ to Bob down an insecure quantum channel, which might be intercepted by Eve. In the case where Eve intercepts $\rho_\sent$, she should learn an arbitrarily small amount of information about the message. 
In the case where Bob receives the state, he should be able
to recover the message with probability converging to one in the limit 
of large $n$. More formally:

\begin{definition}[private state transfer]
\label{def:merging}  
  
Consider the {\it message} state $\Psi_{KR}$ shared between the sender Alice and a reference. Let Alice, Bob and Eve share the state $\ket{\psi_{ABE}}^{\otimes n}$ and have further registers
  $a,\sent$ and $b$ for Alice and Bob, respectively. Consider Alice's local operation (a completely positive trace preserving map)
  ${\cal M_A}:A K \longrightarrow a\alpha$ and Bob's local operation
  ${\cal M_B}:B \sent  \longrightarrow b$. Then
  a \emph{private state transfer protocol} for $\Psi_{KR}$ has error $\delta$ and security
parameter $\epsilon$, if
  \beq
    \label{eq:def-merg}
    \| \rho_{bR} - \Psi_{KR}  \|_1
                                                                  \leq \delta,\eeq
and 
\beq
    \label{eq:def-privacy}
    \| \tilde \rho_{RE\sent } -\tilde \rho_R\ot \tilde\rho_{E\sent} \|_1
                                                                  \leq \epsilon,
\eeq
where
\begin{equation}
\rho_{RKa\alpha BE} := {\cal M}_A(\Psi_{KR}\otimes \psi_{ABE}^{\otimes n}),
\end{equation}
and
\begin{equation}
\tilde \rho_{RKa bE} := {\cal M}_B \circ{\cal M}_A(\Psi_{KR}\otimes \psi_{ABE}^{\otimes n}).
\end{equation}
\end{definition}

For classical messages we let $\Psi_{KR} = \frac{1}{d}\sum_k \ket{kk}\bra{kk}_{KR}$ and define the optimal rate $C(\rho_{AB})$ as the ratio of $\log(d)$ per $n$, for the largest $d$ for which a private state transfer protocol is possible, with negligible error for asymptotic large $n$. 

For the optimal rate of quantum messages, in turn, we set $\ket{\Psi_{KR}} = \frac{1}{\sqrt{d}}\sum_{k}\ket{k, k}_{KR}$ and define $Q(\rho_{AB})$ as the asymptotic optimal ratio of $\log(d)/n$, over all private state transfer protocols. 
\vspace{0.1 cm}

\noindent
\textit{Schumacher-Westmoreland scheme.} To prove Eq. (\ref{optimalratesingleletter}), we will make use of the result from \cite{schumacher2006quantum} for the one-time-pad in the case where the message is classical and the state $\rho_{AB}$ shared by Alice and Bob is not correlated with Eve. The main point of the argument is the construction of a set of quantum operations $\{{\cal\ E}_{k, n} \}$ on Alice's system and a probability distribution $\{ p_{k, n} \}$ such that in the limit of large $n$,      
\begin{equation} \label{decoding}
\frac{1}{n}\chi(\{ p_{k, n}, {\cal E}_{k, n}\otimes \id_{B}(\psi_{AB}^{\otimes n})\}) \rightarrow I(A:B)_{\rho},
\end{equation}
and
\begin{equation} \label{privacy}
\frac{1}{n}\chi(\{ p_{k, n}, {\cal E}_{k, n}(\psi_{A}^{\otimes n})\} \rightarrow 0,
\end{equation}
where $\chi(\{ q_k, \sigma_k\}) := S(\sum_k q_k \sigma_k) - \sum_k q_k S(\sigma_k)$ is the Holevo information \cite{Holevo98}. By the HSW theorem \cite{SchumacherW97} Alice can then send secret classical messages to Bob at a rate $I(A:B)$ by applying one of the ${\cal\ E}_{k, n}$ operations to her part of the state and sending it down the insecure channel. Eq. (\ref{decoding}) guarantees that Bob is able to decode Alice's message in the case the channel is not tampered, while Eq. (\ref{privacy}) ensures that Eve does not learn anything from the message being sent by intercepting the channel. 
 
\vspace{0.1 cm}

\noindent
\textit{Mutual independence.} A natural quantity which will arise in our discussion is the so-called {\it mutual independence} $I_\Lambda$~\cite{horodecki2009quantum}, which we now define. Consider some sequence of maps $\Lambda^{(n)}$, from a restricted class of operations $\Lambda$, applied to subsystem $AB$ with the property that
\begin{equation}
\rho^{(n)}_{ABE} := \Lambda^{(n)}\otimes \id_{E}(\psi_{ABE}^{\otimes n})
\end{equation}
is such that
\begin{equation}
\Vert \rho^{(n)}_{ABE} - \rho^{(n)}_{AB} \otimes \rho^{(n)}_{E} \Vert_1 \rightarrow 0.
\label{eq:MIcriteria}
\end{equation}
Then
\begin{definition}[mutual independence] \label{midef}
Given a state $\psi_{AB}$, consider a protocol from a class of operations $\Lambda$ for extracting mutual independence ${\cal P} = \Lambda^{(n)}$. Define the rate
\begin{equation}
R({\cal P}, \rho_{AB}) := \liminf_{n \rightarrow \infty} \frac{1}{2} I(A : B)_{\Lambda^{n}(\psi_{AB}^{\otimes n})}.
\end{equation}
Then we define the mutual independence rate of $\psi_{AB}$ as
\begin{equation}
I_\Lambda(\rho_{AB}) := \sup_{\cal P} R({\cal P}, \psi_{AB}).
\end{equation}
\end{definition}

The quantity $\mutind$ can be thought of as the rate of private mutual information
that can be extracted from a state under the class of operations $\Lambda$. As an immediate consequence of Schumacher-Westmoreland construction and Definition \ref{midef}, we find that $C(\psi_{AB})$ is lower bounded by $I_{\text{1-LOCC}}(\psi_{AB})$, where 1-LOCC is the class of local operations assisted by one-way classical communication. It turns out, perhaps surprisingly, that one-way LOCC is not the right class of operations to be considered here! 

As we show, the rate of private messages that can be sent is given by $\mutindss(\psi_{AB})$, the mutual independence when $\Lambda$ is the class of local operations assisted by a \textit{symmetric-side channel}. This is a channel given
by an isometry followed by partial trace 
$\psi_A \rightarrow \tr_E\rho_{BE}$  such that $\rho_{BE}$ is unchanged after interchanging
system $E$ with system $B$.  In \cite{brandao2010public}, it is shown that
\beq \label{eq:ISS}
\mutindss(\psi_{AB})=\sup_{A\rightarrow a\alpha}\frac{1}{2}
(I(a:B|\alpha)-I(a:E|\alpha))
\eeq
where the supremum is taken over channels $A\rightarrow a\sent$.
%
In \cite{brandao2010public}, we prove as well that this same quantity is equal to a weaker variant of mutual independence, in which Eq. (\ref{eq:MIcriteria}) is replaced by the weaker criteria
\begin{equation}
\Vert \rho^{(n)}_{AE} - \rho^{(n)}_{A} \otimes \rho^{(n)}_{E} \Vert_1 \rightarrow 0.
\label{eq:WMIcriteria}
\end{equation}
\vspace{0.1 cm}

\noindent
\textit{Main result.} We now show
\begin{theorem}
\beq
\rateq(\psi_{AB})=\ratec(\psi_{AB})/2=\mutindss(\psi_{AB})
\eeq
\end{theorem}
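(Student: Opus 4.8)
The plan is to establish the two operational identities $\ratec=2\mutindss$ and $\rateq=\mutindss$, each through a matched achievability and converse bound; the claimed relation $\rateq=\ratec/2$ then follows, and since Eq.~(\ref{eq:ISS}) already supplies the single-letter formula for $\mutindss$, these identities simultaneously prove Eq.~(\ref{optimalratesingleletter}). The guiding principle is the identification, noted after Eq.~(\ref{eq:ISS}), of $\mutindss$ with the symmetric-side-channel assisted distillable entanglement $\frac{1}{2}(I(a:B\alpha)-I(a:E\alpha))$ of Ref.~\cite{SSW2008}: the optimal protocol should first distill private entanglement from $\psi_{AB}$ and then consume it as a standard resource, with every system actually transmitted over the public channel rendered maximally mixed and hence invisible to Eve.

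For achievability I would, for the classical rate, apply the optimal symmetric-side-channel extraction of mutual independence so that the residual state $\rho^{(n)}_{AB}$ is decoupled from Eve, $\|\rho^{(n)}_{ABE}-\rho^{(n)}_{AB}\otimes\rho^{(n)}_{E}\|_1\to0$, and then run the Schumacher--Westmoreland scheme of Eqs.~(\ref{decoding})--(\ref{privacy}) on $\rho^{(n)}_{AB}$; by Definition~\ref{midef} we have $\frac{1}{2n}I(A:B)\to\mutindss$, so $\frac{1}{n}I(A:B)\to2\mutindss$ private classical bits are transmitted, giving $\ratec\ge2\mutindss$. Here the public channel itself furnishes the symmetric side channel demanded by the extraction, since under the privacy constraint any leakage to the intended receiver Bob is mirrored to the potential interceptor Eve. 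For the quantum rate I would then invoke the quantum one-time pad: Alice encrypts each message qubit with two private classical bits obtained from the above scheme, applying a random Pauli so that the ciphertext is exactly $\id/2$, and sends this maximally mixed carrier over the public channel while the two key bits travel through the private classical channel; Bob decrypts, yielding $\rateq\ge\ratec/2=\mutindss$. Equivalently, teleportation and superdense coding applied to the private entanglement distilled above give $\rateq\ge\mutindss$ and $\ratec\ge2\mutindss$ directly, the distilled pairs being private because purity of $\ket{\psi_{ABE}}$ forces them to decouple from Eve.

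The converse bounds $\rateq\le\mutindss$ and $\ratec\le2\mutindss$ carry the real difficulty. The key step is to observe that the privacy requirement, Eq.~(\ref{eq:def-privacy}), forces Alice's map ${\cal M}_A$ to decouple the transmitted system from Eve, $\|\tilde\rho_{RE\alpha}-\tilde\rho_R\otimes\tilde\rho_{E\alpha}\|_1\le\epsilon$, which realizes the weak mutual-independence criterion of Eq.~(\ref{eq:WMIcriteria}) for the sent system; combined with the recovery requirement, Eq.~(\ref{eq:def-merg}), a data-processing and continuity (Fannes) argument bounds the transmitted information by the optimal weak mutual-independence rate, which by \cite{brandao2010public} equals $\mutindss$, the factor of two in the classical case entering through the Holevo bound on recoverable classical information. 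The main obstacle, and the reason the problem flagged one-way LOCC as the wrong class, is to make precise that the relevant free resource is a \emph{symmetric-side channel} rather than one-way LOCC: because Eve holds a purification and the carrier travels over an insecure channel, whatever reaches Bob is, in the worst case over interception, symmetrically available to Eve, so assisting operations may be taken in the symmetric-side-channel class without increasing the private rate, yet no larger class is forced upon the converse. Pinning down this equivalence is what makes achievability and converse meet at $\mutindss$ and thereby yields the single-letter formula.
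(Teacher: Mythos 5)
Your proposal is correct and follows essentially the same route as the paper: achievability by locally simulating the symmetric side channel over the insecure public channel so as to distill (weak) mutual independence and then running Schumacher--Westmoreland, the converse by noting that the privacy condition forces $I(K:E\alpha)\to 0$ so that the protocol itself witnesses the single-letter expression of Eq.~(\ref{eq:ISS}), and $\rateq=\ratec/2$ via the $2\log d$-bit quantum one-time pad. The only cosmetic differences are that you route the converse through the weak mutual-independence criterion of Eq.~(\ref{eq:WMIcriteria}) rather than plugging the protocol's ensemble directly into Eq.~(\ref{eq:ISS}), and that you offer an alternative achievability via teleportation and superdense coding on distilled private entanglement; neither changes the substance.
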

\begin{proof}
We begin by considering $\ratec(\psi_{AB})$, i.e.\ Alice wishes to send Bob
a private classical message, and will then prove the result
for $\rateq(\psi_{AB})$. To see that
$
\mutindss(\psi_{AB}) \geq \ratec(\psi_{AB}) / 2, 
$
consider an optimal protocol for $\ratec(\psi_{AB})$, which can always be taken to be as follows: Alice applies the quantum operation ${\cal E}_{k, n} \otimes \id_{BE}$ with probability $p_{k, n}$, generating the ABE ensemble $\{ p_{k, n}, {\cal E}_{k, n}(\psi_{ABE}) \}$, with $\rho_\sent={\cal E}_{k, n}(\psi_{A})$ being sent to Bob, and $k$ the private message to be communicated. Then we have
\begin{equation}
\ratec(\psi_{AB}) = \lim_{n \rightarrow \infty} \frac{1}{n}\chi( p_{k, n}, {\cal E}_{k, n}\otimes \id_B(\psi_{AB})).
\end{equation}
Consider the state after Alice's optimal local operation
\begin{equation}
\rho_{KABE}^n := \sum_k p_{k, n} \ket{k}_K\bra{k} \otimes  \left({\cal E}_{k, n}\otimes \id_{BE}\right)(\psi_{ABE}) 
\end{equation}
Then, from  Eq. (\ref{eq:ISS}) we get
\begin{equation}
\mutindss(\psi_{AB})  \geq \frac{1}{2} \left(  I(K : B\sent)_{\rho} - I(K : E\sent)_{\rho} \right).
\end{equation}
But  $I(K : B\sent)_\rho = \chi( p_{k, n}, {\cal E}_{k, n} \otimes \id_B(\psi_{AB}) )$ and  $I(K : E\sent)_\rho^n \rightarrow 0$ with increasing $n$, since ${\cal E}_{k, n} \otimes \id_{E}(\psi_{AE})$ must satisfy Condition (\ref{eq:def-privacy}) and be asymptotically independent of $k$. Therefore we get
$
\mutindss(\psi_{AB}) \geq \ratec(\psi_{AB})/2. 
$

Next we need to show that 
$
\mutindss(\psi_{AB}) \leq C(\psi_{AB})/2.
$
First, suppose that on top of the insecure ideal quantum channel Alice and Bob
have access to a symmetric-side channel. Then they could distill 
$\mutindss(\psi_{AB})$ of mutual
independence, using the symmetric side-channel. They are now in the situation
considered by Schumacher and Westmoreland, who showed that in the case where
Alice and Bob are initially product with Eve, $\ratec(\psi_{AB})=I(A:B)$. Thus here we would get
$\ratec(\psi_{AB})=2 \mutindss(\psi_{AB})$
of secure classical communication. 

Of course in the setting we are considering, they do not have access to the symmetric
side-channel. However suppose Alice simulates locally the side-channel, 
sends the part that would go to Bob through the insecure quantum channel and traces out the part 
which would go to 
Eve. Then, on one hand, if Eve does not intercept the channel, Bob will get his share of what is send through the 
symmetric side-channel and they can distill at least $\mutindss(\psi_{AB})$ of weak mutual independence and achieve the rate $\ratec=2 \mutindss(\psi_{AB})$. 
I.e.\ if Eve doesn't get her share of the output $\alpha'$ of the symmetric side-channel Alice and Bob can not be in a worse position than if she did receive it. 
On the other hand, if Eve intercepts the state sent through 
the insecure channel, then this is the same state she would get in 
the case they were connected by a symmetric side-channel (because
what goes to Bob and Eve is symmetric), so Eve must still be decoupled from Alice's final state. This is so because Alice and Eve's state must be product in the end of the protocol for distilling mutual independence. Thus she gets no information about $\rho_K$.

This proves $\ratec=2 \mutindss(\psi_{AB})$. That $\rateq(\psi_{AB})=\ratec(\psi_{AB})/2$ comes from
the fact that instead of using the quantum one-time pad to send private messages, Alice and Bob could just as well use it to share a classical private key
$\sum \proj{kk}_{AB}/d^2$. This key can then be used to encrypt quantum states which can then be sent through the insecure quantum channel.  

It is known \cite{boykin-qe,mosca-qe,HaydenLSW03-approxrandom} that the amount of key required to encrypt a state of dimension $\log d$ is given by $2\log d$. In more detail, The procedure for encrypting a quantum state is for  Alice to perform randomizing unitaries $\sum _k \proj{k} \otimes U_k$ controlled on the classical key where $U_k$ is a complete set of unitaries acting on the state she wants to encrypt. Bob can then decrypt the quantum state by performing $U_k^\dagger$. E.g. to encrypt a qubit, Alice acts one of the four Pauli operators $\id,\sigma_x,\sigma_y,\sigma_z$ with the choice of which operator to act decided by two bits of key. 
\end{proof}
\vspace{0.2 cm}
Note that when we are using the key to
encrypt quantum states, we can modify the protocol slightly to include an authentication step~\cite{debbie-qvc,dan-auth} so that if at some later point,  
Bob is allowed at least one bit of backwards communication, the key can be recycled~\cite{debbie-qvc,oh-recycling}
and used to encrypt more quantum states.  The bit of back-communication
is required to signal to Alice that the protocol succeeded (i.e. that Eve didn't disturb the sent states too much) and is not part of the orignal scenario
considered here.  However, in such a case,
one can prove that the one-time pad
can be recycled in the case where we are using it to send quantum states~\cite{oh-recycling}!
\vspace{0.1 cm}

\noindent
\textit{A direct protocol.} We can also construct a different protocol which encrypts quantum states directly using the one-time pad without first using it to create a classical key. This results in a saving of $\log{d}$ uses of the public quantum channel. 

Recall that to create a classical key, Alice applies ${\cal E}_k \otimes \id_{BE}(\psi_{ABE}^{\otimes n})$ conditioned on a random classical variable $k$. To encrypt a quantum state directly,
Alice applies ${\cal E}_k$ coherently, controlled on her half $K$ of the
entangled state $\psi_{KR}=\sum p_k \ket{k}_{R}\ket{k}_K$,
 i.e. she performs the operation $\sum \proj{k}_K\otimes {V}_k$, where $V_k$ is an isometric extension of the operation ${\cal E}_k$. This produces the total state $\ket{\Psi}=\sum p_k \ket{k}_{R}\ket{k}_K\ket{\psi^k}_{\alpha\alpha'BE}$
where $\rho_{\enviro}^k$ is the local environment produced under the action of map ${\cal E}_k$ and $\rho_\sent$ is its output.
Alice then sends $\rho_\sent$ to Bob, who can then coherently
decode $\rho^k_{\sent B}$ producing the state 
$\sum p_k \ket{k}_{R}\ket{k}_K\ket{k}_{B'}\ket{\psi^0}_{\sent\enviro BE}$. 
The protocol is thus far secure, because after tracing out system $K$, the state
$\rho_{R\sent E}$ is exactly the same as in the case of sending a 
classical message, and thus satisfies the privacy condition 
(\ref{eq:def-privacy}).

Since the state
$\sum p_k \ket{k}_{R}\ket{k}_K\ket{k}_{B'}$ has $S(K|B')=0$, Alice can {\it merge}~\cite{how-merge} her share ($K$) of the state to Bob by performing
a complete measurement in a random basis and communicating the result to Bob. In \cite{how-merge} it was shown that $S(K|B')$ 
is the amount of EPR  pairs that is needed to send Alice's share $K$ of 
$\ket\psi_{KB'R}$  by performing a measurement and if $S(K|B')=0$, then no additional EPR
pairs are needed. Alice's merging measurement
completely decouples the $K$ system from the reference, with the result that if
Alice sends the remainder of her systems to Bob, the state must have been transmitted.
She could also perform a measurement
in the Fourier basis and communicate the result. Since the measurement is complete,
the number of measurement outcomes is just $nH(K)$, and because
we wish Eve to learn no information about the state, Alice needs to use an additional
$n H(K)$ of the quantum one-time pad to encrypt the measurement result and send it. 

Alice's measurement result is independent of the final state (as in teleportation~\cite{teleportation}) so
we can do the measuring and sending coherently, which will result in $nH(K)$ 
EPR pairs being created \cite{family} in the case
where Eve does not interfere with the channel. However, these EPR pairs can only be used at some later time if Bob verifies that he received them using an authentication scheme
involving at least one bit of back-communication \cite{dan-auth}. Note that if $R$ is held
by Alice, both protocols for sending quantum states can also be used to create secure EPR pairs between Alice and Bob. 

The direct protocol for encrypting quantum states uses $\log d$ less uses of the channel than if we first create a classical key, and then send encrypted quantum states.  As a result, $\log{d}$ less bits of key is left 
over if we are allowed back communication at some later point in time
to recycle the key.  This is
in keeping with a fundamental law of privacy~\cite{oh-recycling} relating sent qubits $(\delta Q$), the change in the amount of shared key ($\delta K$), and messages 
sent ($\delta M$) (whether they be classical or quantum):  
\beq
\delta K \leq \delta Q - \delta M \spacey .
\eeq

It is also worth noting the connection between merging, and encryption of the quantum states in this case.  Encrypting the quantum
state means that Alice's share of $\ket\Psi_{KR}$ should be {\it decoupled} from the reference $R$ before being sent down the channel.  
At the same time, this decoupling of the reference from Alice's laboratory is the condition for Alice to succeed in sending her share \cite{how-merge,how-merge2}.  
\vspace{0.1 cm}

\noindent
\textit{Approximate encryption with half key.} As we have noted, the condition for decoupling system $K$ from the reference $R$ is that $2\log{d}$ unitaries are applied.  It turns out there is a weaker form of quantum state encryption, where only slightly more than $\log{d}$ bits of key are used~\cite{randomization}. In such a case, the protocol is secure in the sense that
if a measurement were to be performed on the reference system, then an eavesdropper would learn an arbitrary small amount about the measurement
result.  We say
that the level of security we
obtain is not {\it composable}~\cite{Ben-Or-Mayers,BHLMO}, meaning that if the reference system remains unmeasured, and the eavesdropper does not measure the parts of the quantum system she intercepted, then we may loose security if we use the encrypted state in another protocol. 

We can easily construct an encryption scheme of this sort, by adapting the first protocol we presented, so that instead of choosing a complete set of $2 \log d$ unitaries $U_k$ which act on the state we are encrypting, we choose  just
over $\log d$ unitaries at random from the Haar measure~\cite{aubrun2009almost}.  Such a set is called {\it randomizing} rather than  {\it completely randomizing}.  It is unclear whether the direct protocol can be adapted in some way for approximate encryption. This is because 
the protocol uses merging, and thus the state to be sent must be completely decoupled from the reference system.
\vspace{0.1 cm}

\noindent
\textit{Discussion. }There are essentially two ways we have used the quantum one-time pad.  One way is to use $\psi_{AB}$ to obtain a correlated and private key, and then use this key to encrypt messages (quantum or classical).  The second,  
is a generalisation of Schumacher and Westmoreland \cite{schumacher2006quantum} where the one-time is used directly to encrypt
the message.  This also holds true in the case of classical distributions.

Our results can also be applied to channel coding, where one has an authenticated noisy quantum channel, which produces
the state $\psi_{ABE},$ and a public quantum
channel.  Here we have just taken $\psi_{ABE}$ as a static resource, but we could
have just imagined that it was produced by a channel from Alice to Bob and Eve. 
This is perhaps closest to a quantum version of the Csiszar-Korner situation and
gives a physical application to the results of 
\cite{SSW2008,smith2008private,brandao2010public}, about state and channel capacities assisted by a symmetric-side channel.

We should thus think of a symmetric channel not as an exotic side-channel which can
be used in conjunction with a standard quantum channel.  Rather, results which make
use of a symmetric channel can be applied to the situation
where an eavesdropper might intercept the quantum
systems that are sent down an insecure channel.  This gives further motivation to
the notion of the public quantum channel as emphasised in \cite{brandao2010public}.

\vspace{0.1 cm}

\noindent
{\bf Acknowledgements.}
J.O. is supported by the Royal Society, and National Science Foundation under Grant No. PHY05-51164 during his stay at KITP.  F.B. is supported by a fellowship ``Conhecimento Novo'' from Funda\c{c}\~ao de Amparo a Pesquisa do Estado de Minas Gerais (FAPEMIG).

\bibliographystyle{apsrev}

\end{document}